\journalname{Journal of Mathematical Biology}
\begin{document}

\title{Territorial pattern formation in the absence of an attractive potential
%\thanks{Grants or other notes
%about the article that should go on the front page should be
%placed here. General acknowledgments should be placed at the end of the article.}
}
%\subtitle{Do you have a subtitle?\\ If so, write it here}

\titlerunning{Territory formation without attraction}

\author{Jonathan R. Potts         \and
        Mark A. Lewis
}

%\authorrunning{Short form of author list} % if too long for running head

\institute{Jonathan R. Potts \at
              School of Mathematics and Statistics, University of Sheffield, UK \\
              Tel.: +44(0)114-222-3729\\
              \email{j.potts@sheffield.ac.uk}            \\
             %\emph{Present address:}
             % School of Mathematics and Statistics, University of Sheffield, UK \\
           \and
           Mark A. Lewis \at
              Centre for Mathematical Biology, Department of Mathematical and Statistical Sciences, University of Alberta, Canada
}

\date{Received: date / Accepted: date}
% The correct dates will be entered by the editor

\maketitle

\begin{abstract} % 150-250 words

Territoriality is a phenomenon exhibited throughout nature.  On the individual level, it is the processes by which organisms exclude others of the same species from certain parts of space.  On the population level, it is the segregation of space into separate areas, each used by subsections of the population.  Proving mathematically that such individual-level processes can cause observed population-level patterns to form is necessary for linking these two levels of description in a non-speculative way.  Previous mathematical analysis has relied upon assuming animals are attracted to a central area.  This can either be a fixed geographical point, such as a den- or nest-site, or a region where they have previously visited.  However, recent simulation-based studies suggest that this attractive potential is not necessary for territorial pattern formation.  Here, we construct a partial differential equation (PDE) model of territorial interactions based on the individual-based model (IBM) from those simulation studies.  The resulting PDE does not rely on attraction to spatial locations, but purely on conspecific avoidance, mediated via scent-marking.  We show analytically that steady-state patterns can form, as long as (i) the scent does not decay faster than it takes the animal to traverse the terrain, and (ii) the spatial scale over which animals detect scent is incorporated into the PDE.  As part of the analysis, we develop a general method for taking the PDE limit of an IBM that avoids destroying any intrinsic spatial scale in the underlying behavioral decisions.  %This is the first analytic demonstration that territorial patterns can form purely by modelling conspecific avoidance mechanisms, without any attractive potential.

\keywords{
Advection-diffusion \and Animal movement \and Home Range \and Individual based models \and Mathematical ecology \and Partial differential equations \and Pattern formation \and Territoriality}
 \subclass{35B36 \and 92B05}
\end{abstract}

\section{Introduction}
\label{intro}

Territoriality is a wide-spread phenomenon throughout nature.  A territory is an area of space used exclusively by an organism, or a group of organisms \citep{burt1943}.  It is formed by deliberately excluding others of the same species (called {\it conspecifics}) from the area, either by aggressive confrontations or mutual consent \citep{adams2001}.  In the last two decades, there have been a number of studies that show analytically how territorial patterns can form from the movements and interactions of animals \citep{lewismurray1993, moorcroftlewis2006, pottslewis2014}.  These use mean-field approximations to model the animals' behavioral decisions as partial differential equations (PDEs), and so enable territory formation to be analyzed using standard tools from PDE theory \citep{murray2002}.  

Despite their success in uncovering drivers behind space use patterns \citep{moorcroftetal2006}, previous analytical models assume an attractive potential influencing the animals' movements.  This could either be fidelity to a central place such as a den- or nest-site \citep{lewisetal1997}, or a tendency to move towards places that the animal has previously visited \citep{briscoeetal2002}.  However, it is not clear that such an attractive potential is in fact necessary for territory formation \citep{moorcroft2012}.

In this paper, we present a PDE model of territorial pattern formation based purely on conspecific avoidance, with no attractive potential.  It is based on an individual based model (IBM) of so-called {\it territorial random walkers}  \citep{GPH1}. 
Previous work used simulation analysis to demonstrate empirically that territories can form in this system \citep{GPH1}.  
Here, we show analytically the circumstances under which territorial patterns may form.  
Specifically, necessary conditions for territorial pattern formation include
\begin{itemize}
\item spatial aversion to scent marks
\item scent marks that persist for longer than it takes the animal to traverse the terrain, and
\item a reaction to conspecific scent averaged over a small area around the animal.
\end{itemize}

As is often the case in ecological applications, it is important that the discrete spatial nature of each interaction is present in the model \citep{durrettlevin1994}.  In the case of territorial interactions, this discreteness is inherent in the fact that animals have a non-zero perceptive radius for determining the presence of scent.  As part of this study, we develop a limiting procedure that enables the transition from IBM to PDE without losing this important aspect of spatially discrete interactions.  This has the potential for general use, as previous limiting procedures have often failed in this regard \citep{durrettlevin1994}.

The paper is organized as follows.  Section \ref{sec:IBM2PDE} derives the PDE from the IBM model.   Sections \ref{sec:tp} and \ref{sec:lin_al} investigate the conditions under which patterns may form. Section \ref{sec:conc} gives some concluding remarks.

\section{From the individual-level description to a system of PDEs}
\label{sec:IBM2PDE}

\subsection{Description of model}
\label{sec:model_desc}

The individual based model (IBM) is based on a 1D model of territoriality which was recently proposed by \citet{GPH1}, but then slightly modified and studied in detail by \citet{GPH2} and \citet{PHG2}.  The model consists of two agents moving on a 1D lattice.  The agents represent either a single individual responsible for territorial defense, or a group of individuals moving together, such as a pack or a flock.  For example, the former is appropriate when modelling fox ({\it Vulpes vulpes}) behaviour where the dominant male in each group marks and secures the territory \citep{harris1980}, whereas the latter may be more appropriate for modelling wolf ({\it Canis lupus}) packs \citep{lewismurray1993}.

Agents move as discrete-time discrete-space nearest-neighbour random walkers, depositing scent marks as they move.  In the model of \citet{GPH2}, the scent remains present for a finite amount of time, called the {\it active scent time} and denoted by the symbol $T_{\rm AS}$.  Once this time is up, provided the lattice site has not been re-scented, the mark is no longer considered by conspecifics to be `active'.  Agents cannot move into any lattice site that contains the active scent of another agent (Figure \ref{model_expl}a).

\begin{figure*}
	% Use the relevant command to insert your figure file.
	% For example, with the graphicx package use
	\includegraphics[width=\textwidth]{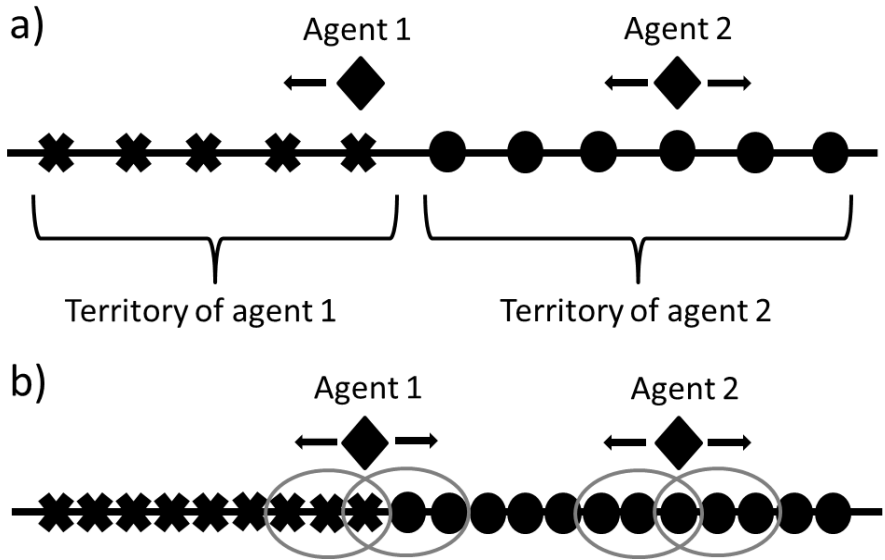}
	% figure caption is below the figure
	\caption{{\bf Pictorial representation of the underlying model.} The territory of each agent represents the sites containing that agent's scent.  In panel (a), agent 1 is unable to move to the right in the next step, since there is active scent of agent 2 there.  However, agent 2 can move in either direction.  In panel (b), we show the case where the lattice spacing, $a$, is halved, so response to scent is averaged over several sites, given by the grey ovals.  As the lattice spacing is reduced by a factor of $h(a)$, so the response to scent is averaged over $2h(a)-1$ sites.  In this situation, agent 1 has a higher probability of moving left than right, while agent 2 has equal probability of moving in both directions.}
	\label{model_expl}       % Give a unique label
\end{figure*}

Our model set-up will take three stages.  Stage 1 uses the formalism of coupled step selection functions \citep{PML2013} to describe a stochastic IBM algorithm which generalizes that of \citet{GPH2}.  Stage 2 describes how to derive a mean-field probabilistic model from the IBM.  Stage 3 involves taking the PDE limit of the probabilistic model.

\begin{table}
\caption{{\bf Glossary of symbols.}  The first column shows the symbol, the second a definition, and the third whether it pertains to the discrete (lattice) model or the continuous limit or both.  Note that some symbols are used either as dimensional quantities or their dimensionless equivalents, depending on the context (see section 2.5).}
\begin{tabular}{l|l|l}
 Symbol  & Definition & Model \\
\hline
n & Arbitrary lattice site & Discrete \\
m & Arbitrary time step & Discrete \\
$E_i(n,m)$ & For animal $i$, the probability that there is conspecific scent at $(n,m)$ & Discrete \\
$\tau$ & Length of a single time step & Discrete \\
$a$ & Lattice spacing & Discrete \\
$\kappa(a)$ & Probability that scent is deposited when the agent visits a lattice site & Discrete \\
$h(a)$ & Number of lattice sites constituting the agent's perceptive radius & Discrete \\
$f_i^m(n|n')$ & Probability of agent $i$ moving to $n$ next jump, given it is at $n'$ at timestep $m$  & Discrete \\
$U(n,m)$ & Probability that agent 1 is at $n$ at timestep $m$  & Discrete\\
$V(n,m)$ & Probability that agent 2 is at $n$ at timestep $m$  & Discrete\\
$P(n,m)$ & Probability that scent of agent 1 is present at $n$ at timestep $m$  & Discrete\\
$Q(n,m)$ & Probability that scent of agent 2 is present at $n$ at timestep $m$  & Discrete \\
\hline
$\mu$ & Mean rate of scent decay & Both \\
$\lambda$ & Mean scent deposition over a unit of space in a unit of time & Both \\
\hline
x & Arbitrary position in continuous space & Continuous \\
t & Arbitrary continuous time & Continuous \\
$u(x,t)$ & Probability density function of agent 1 at time $t$ & Continuous \\
$v(x,t)$ & Probability density function of agent 2 at time $t$ & Continuous \\
$p(x,t)$ & Probability that scent of agent 1 is present at $(x,t)$ & Continuous \\
$q(x,t)$ & Probability that scent of agent 2 is present at $(x,t)$ & Continuous \\
$D$ & Diffusion constant & Continuous \\
$\delta$ & The agent's perceptive radius & Continuous \\
$\bar{p}(x,t)$ & Mean of $p(x,t)$ in a $\delta$-ball around $x$ & Continuous \\
$\bar{q}(x,t)$ & Mean of $q(x,t)$ in a $\delta$-ball around $x$ & Continuous \\
$L$ & Width of terrain & Continuous \\
$m$ & Dimensionless composite variable $\mu L/\lambda$ & Continuous \\
$\epsilon$ & Dimensionless composite variable $D/L\lambda$ & Continuous
\end{tabular}
\label{glossary}
\end{table}

\subsection{Stochastic algorithm for the individual based model}
\label{sec:IBM2PDE_IBM}

If unconstrained by scent marks, an agent is simply a nearest-neighbor random walker.  Therefore the probability that agent $i$ ($i \in \{1,2\}$) moves from site $n'$ to $n$ is $\phi_i(n|n')=1/2$ if $|n-n'|=1$ and $\phi_i(n|n')=0$ otherwise.  This function $\phi_i(n|n')$ is the {\it environment-independent movement kernel}.

Now we add the effect of scent marks, which for this paper are thought of as constituting of the animal's `environment'.  For each agent $i$, let the environment, ${E}_i(n,m)$, be the probability that there is conspecific scent at lattice site $n$ and timestep $m$.  We give two possible definitions for ${E}_i(n,m+1)$, denoted by ${E}^1_i(n,m+1)$ and ${E}^2_i(n,m+1)$, and both defined in terms of the state of the system at timestep $m$.  The first is given by
\begin{align}
{E}^1_i(n,m+1)=\begin{cases}1 & \text{if an agent $j\neq i$ is at position $n$ at any time }\\
&\text{between $m-T_{AS}+1$ and $m$,}
\\
0 & \text{otherwise.}
\end{cases}
\label{cssf_population1}
\end{align}
If ${E}^1_i(n,m)=1$ then there is conspecific scent present, otherwise there is not.  This is the definition used by \citet{GPH2} and \citet{PHG2}.  

An alternative to equation (\ref{cssf_population1}) is the following definition
\begin{align}
{E}^2_i(n,m+1)=\begin{cases}1-\mu\tau & \text{with probability $\kappa(a)$, if an agent $j\neq i$ is at}\\
&\text{position $n$ at timestep $m$,}
\\
(1-\mu\tau){E}^2_i(n,m) & \text{otherwise,}
\end{cases}
\label{cssf_population2}
\end{align}
where $\tau$ is the length of a timestep and $\kappa(a)$ is the probability that scent is deposited when the animal visits a lattice site.  Notice that scent left at timestep $m$ has a probability $1-\mu\tau$ of remaining present at timestep $m+1$.  

Introducing $\kappa(a)$ allows us to change the lattice spacing $a$ without changing the average distance moved between scent depositions, by insisting that $a/\kappa(a)$ is kept constant.  From Section \ref{sec:IBM2PDE_agent} onwards, we will use equation (\ref{cssf_population2}) to describe scent deposition and decay.  However, the stochastic algorithm of this section can be defined equally well using either equation (\ref{cssf_population1}) or (\ref{cssf_population2}).

We now define the interaction term, which denotes how the scent affects the agent's movement.  Animals typically have a perceptive radius that determines the spatial area over which they respond to scent.  The model of \citet{GPH2} implicitly identified this perceptive radius with the lattice spacing $a$.  However, this limits the model's flexibility: if the lattice spacing is changed then the model assumptions about the animal's perceptive radius are also changed.  Therefore, to ensure our model is not constrained by the choice of $a$, we define the interaction term, ${C}_i^j(n,m)$, to be a Bernoulli random variable taking value 1 with probability 
\begin{align}
\varphi=1-\frac{1}{2h(a)-1}\sum_{l=1-h(a)}^{h(a)-1}{E}^j_i(n+l,m),
\label{varphi}
\end{align}
where $j \in \{1,2\}$, $h(a)$ is defined so that $ah(a)$ is the perceptive radius of the animal, and ${C}_i^j(n,m)$ takes value 0 with probability $1-\varphi$.  The model from \citet{GPH2} implicitly had $h(a)=1$.  In general, to change the lattice spacing whilst keeping the perceptive radius $\delta$ constant requires setting $h(a)=\delta/a$, which holds as long as $h(a)$ is an integer (see Figure \ref{model_expl}b).

The probability $f_i^m(n|n')$ of agent $i$ moving to $n$ at timestep $m$, given that is was previously at position $n'$, is a combination of $\phi_i(n|n')$ and ${C}_i^j(n,m)$, written as follows
\begin{align}
f_i^m(n|n')=\begin{cases}\frac{\phi_i(n|n'){C}_i^j(n,m)}{\phi_i(n'+1|n'){C}_i^j(n'+1,m)+\phi_i(n'-1|n'){C}_i^j(n'-1,m)} & \text{if ${C}_i^j(n'+1,m)+ {C}_i^j(n'-1,m)\neq 0$},
\\
\delta_k(n-n') & \text{otherwise,}
\end{cases}
\label{cssf_jump_prob}
\end{align}
where $\delta_k$ is the Kronecker delta.

Equation (\ref{cssf_jump_prob}) allows us to describe the stochastic algorithm.  This is a one-step Markov process, so can be fully described by determining the possible states of the system at timestep $m+1$, given the state at time $m$.  Suppose that, for some $m$, we know ${E}^j_i(n,m)$ for every $n$.  Suppose further that animal $i$ is at position $n_i$ at timestep $m$.  Then the algorithm is as follows
\begin{enumerate}
\item Calculate ${E}^j_i(n,m+1)$ for each $n$.
\item Define a categorical distribution taking one of three values $n_i-1,n_i,n_i+1$ with probabilities given by $f_i^m(n_i-1|n_i)$, $f_i^m(n_i|n_i)$ and $f_i^m(n_i+1|n_i)$ respectively.  These values are the possible future positions of animal $i$.  
\item Draw a random variable from this categorical distribution and move the animal to the position just drawn.  
\item Repeat steps 2 and 3 for each animal in turn.
\end{enumerate}

\subsection{Probability distribution of an agent in a given scent distribution}
\label{sec:IBM2PDE_agent}

To construct a probabilistic master equation describing the above stochastic process, we first assume that the evolution of the scent marks can be decoupled from the movement of the agent.  In other words, we calculate the equation governing a single step of each agent's movement that is true for any fixed, arbitrary scent distribution of the other agent.  This is a so-called mean-field approximation, that assumes covariates between the agent and conspecific scent are small enough to ignore.  

Let $U(n,m)$ (resp. $V(n,m)$) be the probability of agent $1$ (resp. $2$) being at position $n$ at timestep $m$ and $P(n,m)$ (resp. $Q(n,m)$) the probability of there being scent present of agent $1$ (resp. $2$) at position $n$ at timestep $m$.  By analysing the probability of moving to site $n$ from either site $n-1$, $n$, or $n+1$ in one timestep, we eventually arrive at the following discrete space-time master equations
\begin{align}
U(n,m+1)=&[1-Q(n+i,m)]\biggl\{ \frac{1}{2}U(n-1,m)[1+Q(n+i-2,m)]+ \nonumber \\
&\frac{1}{2}U(n+1,m)[1+Q(n+i+2,m)]\biggr\} + \nonumber \\
&U(n,m)Q(n+i-1,m)Q(n+i+1,m), \label{prob_me1} \\
V(n,m+1)=&[1-P(n+i,m)]\biggl\{ \frac{1}{2}V(n-1,m)[1+P(n+i-2,m)]+\nonumber \\
&\frac{1}{2}V(n+1,m)[1+P(n+i+2,m)]\biggr\} + \nonumber \\
&V(n,m)P(n+i-1,m)P(n+i+1,m),
\label{prob_me2}
\end{align}
where the following implicit summation notation \citep{einstein1916} is used
\begin{align}
\label{sum_not}
P(n+i,t):=\frac{1}{2h(a)-1}\sum_{i=1-h(a)}^{h(a)-1} P(n+i,t), \nonumber \\
Q(n+i,t):=\frac{1}{2h(a)-1}\sum_{i=1-h(a)}^{h(a)-1} Q(n+i,t),
\end{align}
and $a$ is the lattice spacing and the product $ah(a)$ is the perceptive radius of the agent.  

To give some intuition behind equations (\ref{prob_me1}) and (\ref{prob_me2}), we focus on equation (\ref{prob_me1}), and note that all of the comments in this paragraph hold equally well for equation (\ref{prob_me2}).  The initial $[1-Q(n+i,m)]$ factor in equation (\ref{prob_me1}) ensures that there is a low probability of moving to position $n$ if there is a high probability of active conspecific scent being present at or around position $n$.  The factor $[1+Q(n+i-2,m)]$ (resp. $[1+Q(n+i+2,m)]$) means that if scent is likely to be present at or around $n-2$ ($n+2$) and the animal is at $n-1$ ($n+1$) at time $m$ then it will be likely to move to $n$ at time $m+1$.  The final summand $U(n,m)Q(n+i-1,m)Q(n+i+1,m)$ means that if the presence of scent is highly probable both to the left and right of an animal at time $m$, then it is likely to stay where it is.  Notice that if $\sum_n U(n,m)=1$ then $\sum_n U(n,m+1)=1$ so that probabilities are conserved. 

Let $\tau$ be the waiting-time between successive jumps.  Then equations (\ref{prob_me1}) and (\ref{prob_me2}) rearrange to give
\begin{align}
\frac{U(n,m+1)-U(n,m)}{\tau}=&\frac{1}{2\tau}[1-Q(n+i,m)]\{U(n-1,m)[1+Q(n+i-2,m)]+\nonumber\\ 
&U(n+1,m)[1+Q(n+i+2,m)]\}-\nonumber\\ 
&\frac{1}{\tau}[1-Q(n+i-1,m)Q(n+i+1,m)]U(n,m),\label{disc_scent_ave1} \\ 
\frac{V(n,m+1)-V(n,m)}{\tau}=&\frac{1}{2\tau}[1-P(n+i,m)]\{V(n-1,m)[1+P(n+i-2,m)]+\nonumber \\
&V(n+1,m)[1+P(n+i+2,m)]\}-\nonumber\\ 
&\frac{1}{\tau}[1-P(n+i-1,m)P(n+i+1,m)]V(n,m),\label{disc_scent_ave2}
\end{align}
Equation (\ref{disc_scent_ave1}) can be re-written as follows
\begin{align}
\label{ume2}
&\frac{U(n,m+1)-U(n,m)}{\tau}=\frac{a^2}{2\tau}\biggl\{\frac{1}{a}\left[\frac{U(n+1,t)-U(n,t)}{a}-\frac{U(n,t)-U(n-1,t)}{a}\right]+\\ \nonumber
&\qquad\frac{1}{2a}\biggl[4U(n+1,t)\frac{Q(n+i+2,t)-Q(n+i,t)}{2a}-\\ \nonumber
&\qquad 4U(n-1,t)\frac{Q(n+i,t)-Q(n+i-2,t)}{2a}\biggr] +\\ \nonumber
&\qquad \frac{1}{a}\biggl[\frac{U(n,t)Q(n+i+1,t)Q(n+i-1,t)-U(n-1,t)Q(n+i,t)Q(n+i-2,t)}{a}-\\ \nonumber
&\qquad \frac{U(n+1,t)Q(n+i+2,t)Q(n+i,t)-U(n,t)Q(n+i+1,t)Q(n+i-1,t)}{a}\biggr]\biggr\},
\end{align}
and similarly for equation (\ref{disc_scent_ave2}).  Taking the limit as $a,\tau \rightarrow 0$ and $n,m,h(a)\rightarrow \infty$ such that $D=a^2/(2\tau)$, $x=na$, $t=m\tau$, $ah(a)=\delta$ in the limit, and writing $u(x,t)$ (resp. $v(x,t)$) for the probability density functions of agent 1's (resp. 2's) position and $p(x,t)$ (resp. $q(x,t)$) for the probability that agent 1's (resp. 2's) active scent is present at position $x$ at time $t$, we arrive at the following PDE (see Appendix A for a full derivation)
\begin{align}
\label{qu_time}
\frac{\partial u}{\partial t}=D\frac{\partial^2}{\partial x^2}[(1-\bar{q}^2)u]+4D\frac{\partial}{\partial x}\left[\frac{\partial \bar{q}}{\partial x}u\right].
\end{align}
The equation governing the evolution of $v(x,t)$ over time is analogous
\begin{align}
\label{pv_time}
\frac{\partial v}{\partial t}=D\frac{\partial^2}{\partial x^2}[(1-\bar{p}^2)v]+4D\frac{\partial}{\partial x}\left[\frac{\partial \bar{p}}{\partial x}v\right].
\end{align}
Here, $\bar{p}(x,t)$ and $\bar{q}(x,t)$ are the locally averaged scent of agents 1 and 2, respectively
\begin{align}
\label{barp}
\bar{p}(x,t)\frac{1}{2\delta}\int_{-\delta}^{\delta}p(x+z,t){\rm d}z,
\end{align}
\begin{align}
\label{barq}
\bar{q}(x,t)\frac{1}{2\delta}\int_{-\delta}^{\delta}q(x+z,t){\rm d}z. 
\end{align}

\subsection{Evolution of the scent distribution}
\label{sec:IBM2PDE_scent}

Recall that we gave two different formulae for the scent decay process, equations (\ref{cssf_population1}) and (\ref{cssf_population2}).  For the purposes of our mean-field analysis, it is convenient to use equation (\ref{cssf_population2}).  In other words, the probability of scent being present at lattice site $n$ decays by a factor of $1-\tau\mu$ each timestep of length $\tau$.  Additionally, when a site is visited by the animal, the probability that there is active scent present jumps to 1 with probability $\kappa(a)$.  

The following master equation follows directly from taking the expectation of either side of equation (\ref{cssf_population2}) 
\begin{align}
P(n,m+1)=(1-\mu\tau) U(n,m)\kappa(a) + (1-\mu\tau)\left[1- U(n,m)\kappa(a)\right]P(n,m).
\label{p_me_disc}
\end{align}
The probability density version of equation (\ref{p_me_disc}) is the limit as $a,\kappa(a),\tau\rightarrow 0$ and $n,m\rightarrow \infty$ of
\begin{align}
p(na,m\tau+\tau)=&(1-\mu\tau) u(na,m\tau)a\kappa(a) + (1-\mu\tau)\left[1- u(na,m\tau)a\kappa(a)\right]p(na,m\tau),
\label{p_me_cts2}
\end{align}
such that $x=na$, $t=m\tau$ and $\lambda=a\kappa(a)/\tau$ in this limit.  

Subtracting $p(na,m\tau)$ from both sides of equation (\ref{p_me_cts2}), dividing by $\tau$ and taking this limit leads to the following ordinary differential equation (ODE) governing $p(x,t)$ 
\begin{align}
\frac{\partial p}{\partial t}= \lambda (1-p)u - \mu p.
\label{pu_time}
\end{align}
We can interpret $\lambda$ as representing the amount of scent deposited over a unit of space in a single unit of time. The derivation for $q(x,t)$ is similar and gives
\begin{align}
\frac{\partial q}{\partial t}= \lambda (1-q)v - \mu q.
\label{qv_time}
\end{align}

Analyzing the system of equations (\ref{qu_time}), (\ref{pv_time}), (\ref{pu_time}) and (\ref{qv_time}) requires choosing an appropriate domain and boundary conditions.  A simple and biologically realistic choice is to assume that agents are confined in a domain $[0,L]$ with zero flux boundary conditions.  The boundary conditions could either come about by being confined in a valley or on a small island.  Alternatively, the conditions could model a situation where the rate of migration of animals into the domain is equal to the rate of movement outwards.  In other words, the population is assumed to be exhibiting a certain spatial and temporal stability.  These boundary conditions are given as follows
\begin{align}
\left\{\frac{\partial}{\partial x}[(1-\bar{q}^2)u]+4\left[\frac{\partial \bar{q}}{\partial x}u\right]\right\}\bigg|_{x=0}=\left\{\frac{\partial}{\partial x}[(1-\bar{q}^2)u]+4\left[\frac{\partial \bar{q}}{\partial x}u\right]\right\}\bigg|_{x=L}=0,
\label{qu_bdry}
\end{align}
\begin{align}
\left\{\frac{\partial}{\partial x}[(1-\bar{p}^2)v]+4\left[\frac{\partial \bar{p}}{\partial x}v\right]\right\}\bigg|_{x=0}=\left\{\frac{\partial}{\partial x}[(1-\bar{p}^2)v]+4\left[\frac{\partial \bar{p}}{\partial x}v\right]\right\}\bigg|_{x=L}=0.
\label{pv_bdry}
\end{align}

The existence of the boundary requires that we need to redefine $\bar{p}(x,t)$ and $\bar{q}(x,t)$ in the cases where $x<\delta$ and $x>L-\delta$, as follows
\begin{align}
\label{barp_redef}
\bar{p}(x,t)=\begin{cases}
\frac{1}{x+\delta}\int_{-x}^{\delta}p(x+z,t){\rm d}z & \mbox{if $x<\delta$}, \\
\frac{1}{2\delta}\int_{-\delta}^{\delta}p(x+z,t){\rm d}z & \mbox{if $\delta\leq x\leq L-\delta$}, \\
\frac{1}{L-x+\delta}\int_{-\delta}^{L-x}p(x+z,t){\rm d}z & \mbox{if $x>L-\delta$}, 
\end{cases}
\end{align}
\begin{align}
\label{barq_redef}
\bar{q}(x,t)=\begin{cases}
\frac{1}{x+\delta}\int_{-x}^{\delta}q(x+z,t){\rm d}z & \mbox{if $x<\delta$}, \\
\frac{1}{2\delta}\int_{-\delta}^{\delta}q(x+z,t){\rm d}z & \mbox{if $\delta\leq x\leq L-\delta$}, \\
\frac{1}{L-x+\delta}\int_{-\delta}^{L-x}q(x+z,t){\rm d}z & \mbox{if $x>L-\delta$}. 
\end{cases}
\end{align}

In addition to the boundary conditions, it is necessary to impose integral conditions on the initial probability distributions $u(x,0)$ and $v(x,0)$, to ensure that probability is conserved.  In other words
\begin{align}
\int_0^L u(x,0){\rm d}x=\int_0^L v(x,0){\rm d}x = 1.
\label{integral_conds_init}
\end{align}
A consequence of equations (\ref{qu_bdry}) and (\ref{pv_bdry}) is that the time-derivative of $\int_0^L u(x,t){\rm d}x$ is zero.  Therefore the initial conditions from equation (\ref{integral_conds_init}) imply that probabilities are conserved at every point in time, i.e.
\begin{align}
\int_0^L u(x,t){\rm d}x=\int_0^L v(x,t){\rm d}x = 1.
\label{integral_conds}
\end{align}

\subsection{A dimensionless version of the model}
\label{sec:IBM2PDE_dimless}

To minimize the number of model parameters, we re-write equations (\ref{qu_time}), (\ref{pv_time}), (\ref{pu_time}), and (\ref{qv_time}), using the following dimensionless parameters
\begin{align}
\tilde{u}=Lu,\,\tilde{v}=Lv,\,\tilde{x}=\frac{x}{L},\,\tilde{t}=\frac{tD}{L^2},\,m=\frac{\mu L}{\lambda},\,\epsilon=\frac{D}{L\lambda}.
\label{dimensionless_params}
\end{align}
Dropping the tildes over the letters to ease notation, we arrive at the following dimensionless system of equations, which will be the object of study for the rest of this paper
\begin{align}
\frac{\partial u}{\partial t}&=\frac{\partial^2}{\partial x^2}\left[(1-\bar{q}^2)u \right]+4\frac{\partial}{\partial x}\left[\frac{\partial \bar{q}}{\partial x}u\right],
\label{qu_dimless}
\end{align}
\begin{align}
\frac{\partial v}{\partial t}&=\frac{\partial^2}{\partial x^2}\left[(1-\bar{p}^2)v \right]+4\frac{\partial}{\partial x}\left[\frac{\partial \bar{p}}{\partial x}v\right],
\label{pv_dimless}
\end{align}
\begin{align}
\epsilon\frac{\partial p}{\partial t}= (1-p)u - mp,
\label{pu_dimless}
\end{align}
\begin{align}
\epsilon\frac{\partial q}{\partial t}= (1-q)v - mq.
\label{qv_dimless}
\end{align}

\section{Territorial patterns}
\label{sec:tp}

We define a {\it territorial pattern} to be a non-trivial steady-state solution to equations (\ref{qu_dimless})-(\ref{qv_dimless}).  These are found by setting to zero the left-hand sides of equations (\ref{qu_dimless})-(\ref{qv_dimless}).  Setting equation (\ref{pu_dimless}) (resp. equation \ref{qv_dimless}) to zero enables the steady state solution of $p(x,t)$ (resp. $q(x,t)$), denoted by $p^\ast(x)$ (resp. $q^\ast(x)$), to be written in terms of the steady state solution of $u(x,t)$ (resp. $v(x,t)$), denoted by $u^\ast(x)$ (resp. $v^\ast(x)$) as follows
\begin{align}
p^\ast(x)= \frac{u^\ast(x)}{m + u^\ast(x)},
\label{pu_ast}
\end{align}
\begin{align}
q^\ast(x)= \frac{v^\ast(x)}{m + v^\ast(x)}.
\label{qv_ast}
\end{align}
To ease notation, we will henceforth drop the asterisks.  By setting equations (\ref{qu_dimless}) and (\ref{pv_dimless}) to zero and integrating with respect to $x$, we have that
\begin{align}
\frac{\rm d}{{\rm d} x}\{(1-\bar{q}[v(\cdot),x]^2)u(x)\}+4\left[\frac{{\rm d} \bar{q}}{{\rm d} x}u(x)\right] = c_1,
\label{qu_ss}
\end{align}
\begin{align}
\frac{\rm d}{{\rm d} x}\{(1-\bar{p}[u(\cdot),x]^2)v(x)\}+4\left[\frac{{\rm d} \bar{p}}{{\rm d} x}v(x)\right] = c_2,
\label{pv_ss}
\end{align}
for constants $c_1$ and $c_2$.  The boundary conditions given by equations (\ref{qu_bdry}) and (\ref{pv_bdry}) imply that $c_1=c_2=0$.  

We use the notation $\bar{p}[u(\cdot),x]$ and $\bar{q}[v(\cdot),x]$ to emphasize the fact that $\bar{p}$ and $\bar{q}$ are functionals.  That is, they map the functions $u(\cdot)$ and $v(\cdot)$, respectively, to the interval $[0,1]$.  These functionals are given by the following formulae
\begin{align}
\label{p_bar_ss}
\bar{p}[u(\cdot),x]=\begin{cases}
\frac{1}{x+\delta}\int_{-x}^{\delta} \frac{u(x+z)}{m+u(x+z)} {\rm d}z, & \mbox{if $x<\delta$}, \\
\frac{1}{2\delta}\int_{-\delta}^{\delta} \frac{u(x+z)}{m+u(x+z)} {\rm d}z & \mbox{if $\delta\leq x\leq 1-\delta$}, \\
\frac{1}{1-x+\delta}\int_{-\delta}^{1-x}\frac{u(x+z)}{m+u(x+z)}{\rm d}z & \mbox{if $x>1-\delta$},
\end{cases} \\
\bar{q}[v(\cdot),x]=\begin{cases}
\frac{1}{x+\delta}\int_{-x}^{\delta} \frac{v(x+z)}{m+v(x+z)} {\rm d}z, & \mbox{if $x<\delta$}, \\
\frac{1}{2\delta}\int_{-\delta}^{\delta} \frac{v(x+z)}{m+v(x+z)} {\rm d}z & \mbox{if $\delta\leq x\leq 1-\delta$}, \\
\frac{1}{1-x+\delta}\int_{-\delta}^{1-x}\frac{v(x+z)}{m+v(x+z)}{\rm d}z & \mbox{if $x>1-\delta$},
\end{cases}
\label{q_bar_ss}
\end{align}

In sum, as well as equations (\ref{p_bar_ss}) and (\ref{q_bar_ss}), we have the following system of equations, whose non-constant solutions correspond to territorial patterns
\begin{align}
\frac{\rm d}{{\rm d} x}\{(1-\bar{q}[v(\cdot),x]^2)u(x)\}+4\left[\frac{{\rm d} \bar{q}}{{\rm d} x}u(x)\right] = 0,
\label{qu_ss2}
\end{align}
\begin{align}
\frac{\rm d}{{\rm d} x}\{(1-\bar{p}[u(\cdot),x]^2)v(x)\}+4\left[\frac{{\rm d} \bar{p}}{{\rm d} x}v(x)\right] = 0,
\label{pv_ss2}
\end{align}
\begin{align}
{p(x)}= \frac{u(x)}{m + u(x)},
\label{pu_ast_loc}
\end{align}
\begin{align}
{q(x)}= \frac{v(x)}{m + v(x)}.
\label{qv_ast_loc}
\end{align}

\subsection{Territorial patterns with only local interactions}

We first examine the case where $\delta\rightarrow 0$ so that agents only respond to scent at the exact position where they are situated.  This means equations (\ref{qu_ss2}) and (\ref{pv_ss2}) become
\begin{align}
\frac{\rm d}{{\rm d} x}\{(1-{q(x)}^2)u(x)\}+4\left[\frac{{\rm d} q}{{\rm d} x}u(x)\right] = 0,
\label{qu_ss_loc}
\end{align}
\begin{align}
\frac{\rm d}{{\rm d} x}\{(1-{p(x)}^2)v\}+4\left[\frac{{\rm d} p}{{\rm d} x}v(x)\right] = 0.
\label{pv_ss_loc}
\end{align}
The limit $\delta\rightarrow 0$ means that the functionals $\bar{p}[u(\cdot),x]$ and $\bar{q}[v(\cdot),x]$ have been replaced by functions $p(x)$ and $q(x)$, which makes analysis tractable.  To ease notation, we hencefore drop the explicit dependence of the functions $u$, $v$, $p$, and $q$ on $x$.

By substituting equations (\ref{pu_ast_loc}) and (\ref{qv_ast_loc}) into (\ref{qu_ss_loc}) and (\ref{pv_ss_loc}), the following system of ODEs for the steady state solution of $(u,v)$ is found
\begin{align}
A\dot{\bf u}&=0, \nonumber \\
A&=m\left( \begin{array}{cc}
(m+2v)(m+v) & 2u(2m+v) \\
2v(2m +u) & (m +2u)(m + u) \end{array} \right), \nonumber \\
\dot{\bf u} &= \left( \begin{array}{c}
{\rm d}u/{\rm d}x \\
{\rm d}v/{\rm d}x \end{array} \right).
\label{uv_ss_a}
\end{align}
The system of ODEs in equation (\ref{uv_ss_a}) is simple enough to analyze mathematically.  The results of this analysis are summarized in the following 
\begin{theorem}
\label{NoTerrsLocInt}
\begin{enumerate}
\item {\bf No scent decay.} If $m=0$ then $p(x)=q(x)=1$ and $u(x)$, $v(x)$ can take any value.
\item {\bf Positive scent decay.} If $m>0$ then there are no non-constant solutions to equation (\ref{uv_ss_a}).  Hence no territorial patterns can form in this case.
\end{enumerate}
\end{theorem}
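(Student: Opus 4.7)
The plan is to handle Part 1 by direct substitution and to concentrate effort on Part 2, where I would integrate each of the two scalar equations (\ref{qu_ss_loc}) and (\ref{pv_ss_loc}) separately to obtain a pair of algebraic conservation laws, and then argue via continuity that the only joint solutions are constants.

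For Part 1, setting $m=0$ in (\ref{pu_ast_loc})--(\ref{qv_ast_loc}) gives $p(x)=q(x)=1$ wherever $u,v>0$, so that $1-p^2$, $1-q^2$, $dp/dx$ and $dq/dx$ all vanish identically; equations (\ref{qu_ss_loc})--(\ref{pv_ss_loc}) are then trivially satisfied by any $u$ and $v$.

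For Part 2, I would avoid inverting the $2\times 2$ matrix $A$ in (\ref{uv_ss_a}) directly, because a short calculation shows $\det A$ vanishes at interior points (for example $u=v=m$). Instead, expanding (\ref{qu_ss_loc}) and dividing through by $u(1-q^2)$ yields
\begin{equation*}
\frac{d}{dx}\ln u \;=\; -\frac{2(2-q)}{1-q^2}\,\frac{dq}{dx},
\end{equation*}
and a partial-fraction decomposition rewrites the right-hand side as $\tfrac{d}{dx}\bigl[\ln(1-q)-3\ln(1+q)\bigr]$. Substituting $q=v/(m+v)$ and integrating produces the scalar conservation law
\begin{equation*}
u(x) \;=\; K\,F\bigl(v(x)\bigr), \qquad F(v) := \frac{m(m+v)^2}{(m+2v)^3},
\end{equation*}
for some constant $K>0$, and symmetric treatment of (\ref{pv_ss_loc}) gives $v(x) = K'F(u(x))$. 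The division step requires $u>0$; this I would justify by first showing that $u\equiv 0$ violates the normalization $\int_0^1 u\,dx = 1$ and that the conservation law above shows that a single zero of $u$ would force $u$ to vanish identically, so in fact $u$ (and similarly $v$) is strictly positive throughout $[0,1]$.

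Combining the two conservation laws gives $u(x) = G(u(x))$ for every $x$, where $G(u) := K\,F(K'F(u))$ is a fixed rational function. If $G(u)-u$ is not the zero rational function, it has only finitely many zeros, so by continuity $u(x)$ must take a single value; then $v(x) = K'F(u(x))$ is also constant. To rule out the alternative $G \equiv \mathrm{id}$, I would evaluate at $u=0$: since $F(0)=1$ and $F>0$ throughout $[0,\infty)$, one has $G(0) = K\,F(K') > 0$, contradicting the requirement $G(0)=0$. I expect the main technical hurdle to be the strict positivity step that legitimizes the division by $u(1-q^2)$; once that is in hand, the fixed-point argument at the end is short and self-contained.
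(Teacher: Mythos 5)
Your argument is correct, and it takes a genuinely different route from the paper's. The paper keeps the system in matrix form $A\dot{\bf u}=0$: away from the variety $\det A=0$ the solution is locally constant, and on that variety it pairs the relation $\det A=0$ with a second polynomial relation (obtained by eliminating the derivatives) and invokes B\'ezout's theorem, after an irreducibility check, to conclude that $(u,v)$ is confined to a finite set, whence continuity forces constancy. You instead integrate each scalar equation separately: the partial-fraction identity $2(2-q)/(1-q^2)=1/(1-q)+3/(1+q)$ gives the explicit first integrals $u=K(1-q)/(1+q)^3=K\,m(m+v)^2/(m+2v)^3$ and symmetrically $v=K'\,m(m+u)^2/(m+2u)^3$ (I have checked these reproduce the two rows of equation (\ref{uv_ss_a})), and composing them yields $u=G(u)$ pointwise with $G-\mathrm{id}$ a nonzero rational function because $G(0)=KF(K')>0$; continuity then forces constancy of $u$, and the second integral forces constancy of $v$. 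Both proofs end with the same ``finitely many admissible values plus continuity'' step, but yours replaces the algebraic-geometry input with an elementary quadrature, handles all cases uniformly (no separate analysis of $du/dx=0$, $dv/dx\neq 0$), and produces explicit conserved quantities such as $u(m+2v)^3/(m+v)^2$ that could be reused in further analysis. The price is the strict-positivity step you flag: the quadrature requires dividing by $u$, whereas the paper's route only uses $u,v\geq 0$. Your sketch of that step is slightly circular as phrased, since ``a single zero forcing $u\equiv 0$'' presupposes the conservation law you are trying to establish globally; the clean version is to apply the quadrature on each connected component of $\{u>0\}$, note that there $u=K_iF(v)\geq K_i\min_{[0,1]}F(v)>0$ because the continuous $v$ is bounded on $[0,1]$ and $F$ is positive, and conclude that no such component can have a boundary point interior to $[0,1]$, so $\{u>0\}=[0,1]$. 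With that patch the proof is complete.
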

\begin{proof}
See appendix B.
\end{proof}

\subsection{Territorial patterns with non-local interactions}

In the case where $\delta>0$, equations (\ref{qu_ss2})-(\ref{qv_ast_loc}) give a system of integral-ODEs, so are harder to analyse analytically.  Instead, we solve them numerically using the method of false transients \citep{mallinsondevahldavis1973}.  This involves solving equations (\ref{qu_dimless})-(\ref{qv_dimless}) forward in time until the solution is unchanging.  

Our algorithm uses a forward-difference approximation for time and a central difference approximation for space.  We divide the interval $[0,1]$ into $1,000$ equal, non-intersecting, sub-intervals of length $0.001$.  We iterate finite-difference versions of equations (\ref{qu_dimless})-(\ref{qv_dimless}) using timesteps of $0.01$, until all of the $u(x,t)$ or $v(x,t)$ values in all of the sub-sections are increasing by less than $10^{-8}$ over each timestep.  The initial conditions have all of $u(x)$ concentrated on the sub-interval $[0.25,0.251)$ and all of $v(x)$ on the sub-interval $[0.75,0.751)$.  This means $u(x)$ and $v(x)$ are zero outside the sub-intervals $[0.25,0.251)$ and $[0.75,0.751)$ respectively, and each integrate to 1 over $[0,1]$. 

Numerical analysis shows that patterns emerge from this system corresponding to two territories: $u(x)$ on the left and $v(x)$ on the right (figure \ref{pde_numerics}a,b).  Notice that a larger scent averaging radius leads to wider overlap of the probability distributions, meaning that the perceptive scale of the animal plays a large role in the territorial patterns that emerge.  

These can be compared with the territories that form in the original IBM with the interaction rules from \citet{GPH2}.  Although there is some qualitative agreement, the patterns generated by the IBM are still quite different to the PDE.  In the IBM, at any point in time, there is a border between the two territories.  This border fluctuates about the central point, typically much slower than the movement of the agent.  Each agent is free to move within its territory borders.  Consequently, the probability density of both agents combined ($u+v$) ends up being roughly uniform (figure~\ref{pde_numerics}c).  This does not happen in the mean field approximation studied here.  Indeed, the value of $u+v$ appears to be lower in the middle of the terrain.  Since this is just an artifact of the assumptions made in using the PDE limit, it is necessary to be cautious when inferring biological lessons from such pattern features.

\begin{figure*}
% Use the relevant command to insert your figure file.
% For example, with the graphicx package use
\includegraphics[width=\textwidth]{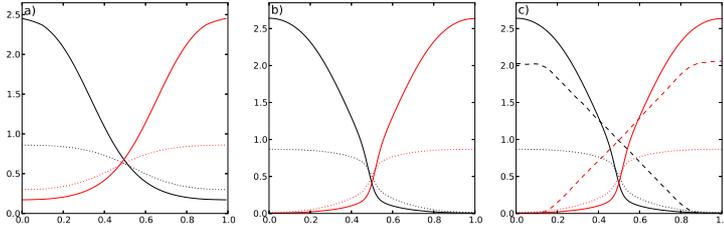}
\caption{{\bf Numerical steady state solutions of the model.} Solid red (resp. black) lines denote values of $u(x)$ (resp. $v(x)$), whereas dotted red (resp. black) lines show values of $p(x)$ (resp. $q(x)$).  In both panels, $m=0.4$ and $\epsilon=0.01$.  In panel (a), $\delta=0.1$, whilst panel (b) has $\delta=0.01$.  Notice that a larger the scent-averaging radius, $\delta$, gives a larger overlap between territories.  Panel (c) compares steady states of simulations of the original IBM (dashed lines) with numerical results from the PDE approximation (solid lines).  As in panels (a) and (b), dotted red (resp. black) lines show values of $p(x)$ (resp. $q(x)$).  Here, $\delta=0.01$, $m=0.4$, and $\epsilon=0.01$.  This corresponds, in the IBM, to $T_{\rm AS}/\tau=500$ and $N=100$.
}
\label{pde_numerics}       % Give a unique label
\end{figure*}

\section{Investigating pattern formation via linear analysis}
\label{sec:lin_al}

A common technique for examining whether patterns spontaneously form in a dynamical system is to linearize the system about the uniform steady state and examine the resulting dispersion relation, e.g. \citet{murray2002} chapter 2.  For our system, the uniform steady state is
\begin{align}
(u_s,v_s,p_s,q_s)=\left(1,1,\frac{1}{1+m},\frac{1}{1+m}\right).
\label{steady_state}
\end{align}
That $u_s=v_s=1$ arises from the integral conditions (equation \ref{integral_conds}).  The values for $p_s$ and $q_s$ then follow from equations (\ref{pu_ast_loc}) and (\ref{qv_ast_loc}).

\begin{figure*}
% Use the relevant command to insert your figure file.
% For example, with the graphicx package use
\includegraphics[width=\textwidth]{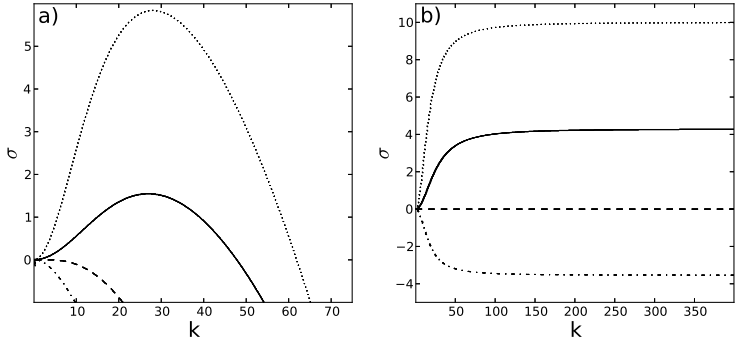}
% figure caption is below the figure
\caption{{\bf Dispersion relations.}  Panel (a) show the dispersion relation for the dynamical system in equations (\ref{qu_dimless}-\ref{qv_dimless}), where $m=0.1$ (solid line), $m=0.5$ (dotted), $m=1$ (dashed), and $m=3$ (dot-dashed).  We set $\epsilon=0.01$ and $\delta=0.01$ throughout.  Panel (b) shows the same dispersion relations, but this time the animals respond only to the scent density at the particular point in which they reside, i.e. we take the limit $\delta\rightarrow 0$.  This system is given in equations (\ref{pu_dimless}), (\ref{qv_dimless}), (\ref{qu_noave}) and (\ref{pv_noave}).  The values of $\epsilon$ and $m$ are identical to those in panel (a).}
\label{disp_rel}       % Give a unique label
\end{figure*}

Letting ${\bf w}=(\hat{u},\hat{v},\hat{p},\hat{q})=(u-u_s,v-v_s,p-p_s,q-q_s)$, we use equations (\ref{qu_dimless}-\ref{qv_dimless}) to give the linearized system
\begin{align}
\frac{\partial \hat{u}}{\partial t} &= \left[1-\frac{1}{(1+m)^2}\right]\frac{\partial^2\hat{u}}{\partial x^2}+2\left[2-\frac{1}{1+m}\right]\frac{\partial^2 \hat{\bar{q}}}{\partial x^2}, \nonumber \\
\frac{\partial \hat{v}}{\partial t} &= \left[1-\frac{1}{(1+m)^2}\right]\frac{\partial^2\hat{v}}{\partial x^2}+2\left[2-\frac{1}{1+m}\right]\frac{\partial^2 \hat{\bar{p}}}{\partial x^2}, \nonumber \\
\frac{\partial \hat{p}}{\partial t} &= \frac{m}{\epsilon(1+m)}\hat{u}-\frac{1+m}{\epsilon}\hat{p}, \nonumber \\
\frac{\partial \hat{q}}{\partial t} &= \frac{m}{\epsilon(1+m)}\hat{v}-\frac{1+m}{\epsilon}\hat{q}.
\label{linear_system}
\end{align}
Searching for solutions of the form $w=(u_0,v_0,p_0,q_0)\exp(\sigma t+{\rm i}kx)$, we obtain the following eigenvector equation
\begin{align}
A{\bf w}&=\sigma{\bf w} \nonumber \\
A&=\left(\begin{array}{cccc}
\left[\frac{1}{(1+m)^2}-1\right]k^2 & 0 & 0 & -2\left[2-\frac{1}{1+m}\right]\frac{k}{\delta}\sin{\delta k} \\
0 & \left[\frac{1}{(1+m)^2}-1\right]k^2 & -2\left[2-\frac{1}{1+m}\right]\frac{k}{\delta}\sin{\delta k} & 0 \\
\frac{m}{\epsilon(1+m)} & 0 & -\frac{1+m}{\epsilon} & 0 \\
0 & \frac{m}{\epsilon(1+m)} & 0 & -\frac{1+m}{\epsilon}
\end{array}\right).
\label{eigenvector_eqn2}
\end{align}
The dispersion relation is given by plotting the real values of $\sigma$ as a function of the wave number $k$, wherever $\mbox{det}(A-\sigma I)=0$.  As shown in Figure \ref{disp_rel}a, patterns can form for a finite range of wavelengths as long as $m<1$; that is, as long as the scent decay is not too rapid.  

We can gain biological insight by relating this result back to the underlying IBM.  Recall that $m=\mu L/\lambda$ (equation \ref{dimensionless_params}).  Recall also that $\lambda$ is the limit of $a\kappa(a)/\tau$.  In the original lattice model, where $\kappa(a)=1$, $a\kappa(a)/\tau$ is simply the speed of the animal. Then $m<1$ if and only if the time it would take a freely moving animal on the lattice to traverse the whole terrain is less than the characteristic timescale for scent-mark decay $1/\mu$. 

The dispersion relation changes somewhat if we examine the case where $\delta \rightarrow 0$, so that animals only respond to scent in the exact place that they are located at any point in time.  In this case, $\bar{p}$ and $\bar{q}$ are replaced by $p$ and $q$ respectively, so that equations (\ref{qu_dimless}) and (\ref{pv_dimless}) are replaced by
\begin{align}
\frac{\partial u}{\partial t}&=\frac{\partial^2}{\partial x^2}\left[(1-{q}^2)u \right]+4\frac{\partial}{\partial x}\left[\frac{\partial {q}}{\partial x}u\right],
\label{qu_noave}
\end{align}
\begin{align}
\frac{\partial v}{\partial t}&=\frac{\partial^2}{\partial x^2}\left[(1-{p}^2)v \right]+4\frac{\partial}{\partial x}\left[\frac{\partial {p}}{\partial x}v\right].
\label{pv_noave}
\end{align}
Equation (\ref{eigenvector_eqn2}) becomes
\begin{align}
A&=\left(\begin{array}{cccc}
\left[\frac{1}{(1+m)^2}-1\right]k^2 & 0 & 0 & -2\left[2-\frac{1}{1+m}\right]k^2 \\
0 & \left[\frac{1}{(1+m)^2}-1\right]k^2 & -2\left[2-\frac{1}{1+m}\right]k^2 & 0 \\
\frac{m}{\epsilon(1+m)} & 0 & -\frac{1+m}{\epsilon} & 0 \\
0 & \frac{m}{\epsilon(1+m)} & 0 & -\frac{1+m}{\epsilon}
\end{array}\right),
\label{eigenvector_eqn}
\end{align}
which is the limit as $\delta\rightarrow 0$ of equation (\ref{eigenvector_eqn2}).  The corresponding dispersion relation is given in Figure \ref{disp_rel}b.  Here, for $0<m<1$, $\sigma$ is an increasing function of $k$, indicating that the steady state is unstable but arbitrarily large wave numbers grow fastest.  In other words, this is an ill-posed problem.  

\section{Discussion and conclusions}
\label{sec:conc}

We have shown how stable territorial patterns can form purely from a conspecific avoidance mechanism, without requiring any attractive potential.  Our model is constructed by taking the continuous space-time limit of a discrete lattice model.  Therefore it can be rigorously linked to the underlying movement and interaction processes.  We have demonstrated that patterns will only form if the scent marks last for a sufficiently long time.  If they decay too quickly, i.e. $m\geq 1$, the territorial structure breaks down.  This can be interpreted as saying territories can only emerge if the animal is able to patrol its territory faster than the scent marks decay.

Similarly, patterns will only form reliably if the animals react to the averaged scent density across the local vicinity of the animal.  From a biological perspective, an animal will always have a perceptive radius over which it will react to scent.  Therefore this spatial averaging is implicit in the system being modeled.  As such, our study demonstrates the importance of ensuring that the mathematical limiting process, moving from discrete to continuous space, does not destroy a key feature of the underlying biology.  Our procedure for performing this limiting process has the potential for broad application, since there are many examples where the discreteness of ecological interactions is known to be an important feature of the modeling process \citep{durrettlevin1994}.

The model is derived from an individual-based model, previously studied using stochastic simulations \citep{GPH1, PHG2}.  As noted in recent reviews \citep{giuggiolikenkre2014, pottslewis2014}, one of the advantages of this approach is that it gives a clear delineation between the related notions of `home range' and `territory'.  The territory of an animal is defined as the area containing active scent marks of the animal \citep{burt1943}.  Therefore, in the model presented here, $p(x,t)$ and $q(x,t)$ can be considered the probabilities of position $x$ being part of the animals' {\it territories} at time $t$.

On the other hand, the home range of an animal is its utilization distribution \citep{burt1943}.  Therefore $u(x,t)$ and $v(x,t)$ can be considered as the {\it home ranges} of the animals at time $t$.  The utilization distribution of an animal is typically much easier to measure in the field than the fluctuating locations of the territory border \citep{PHG2}.  In our approach, the concepts of territory and home range are related by rather simple steady-state equations (\ref{pu_ast}) and (\ref{qv_ast}).  This gives an explicit way to calculate the probable location of a territory border, given data on its home range.

A key reason for studying PDE limits of IBMs is to provide mathematical analysis of the conditions under which patterns may form, rather than relying on empirical evidence from computer simulations.  However, as shown here, patterns that form from numerically solving the PDE may to be quantitatively different from those formed by simulating the IBM.  Therefore, if such PDE models were fitted to data on real systems, it is important for the user to check that the PDE results are not significantly different to those given by the IBM.  Otherwise, there is a danger of making incorrect inferences about biological patterns, that may merely arise as artifacts of the mean-field approximation and/or limiting procedure.

Models such as ours could be of use in analyzing territory formation when there is no reason to believe the animals have any fidelity towards particular locations, or where these locations are not known, e.g. \citet{batemanetal2015}.  Though memory processes have recently been invoked to explain pattern formation \citep{briscoeetal2002, moorcroft2012}, it is unclear how to find out what is going on inside the minds of the animals using current science.  This makes conjectures about memory difficult to falsify.  Conspecific avoidance mechanisms, on the other hand, can be measured directly, e.g. \citet{arnoldetal2011}.  Therefore our model of territorial emergence has the potential to be parametrized from empirically measured interaction mechanisms.

% For one-column wide figures use
%\begin{figure}
% Use the relevant command to insert your figure file.
% For example, with the graphicx package use
%  \includegraphics{example.eps}
% figure caption is below the figure
%\caption{Please write your figure caption here}
%\label{fig:1}       % Give a unique label
%\end{figure}
%
% For two-column wide figures use

% For tables use
%\begin{table}
% table caption is above the table
%\caption{Please write your table caption here}
%\label{tab:1}       % Give a unique label
% For LaTeX tables use
%\begin{tabular}{lll}
%\hline\noalign{\smallskip}
%first & second & third  \\
%\noalign{\smallskip}\hline\noalign{\smallskip}
%number & number & number \\
%number & number & number \\
%\noalign{\smallskip}\hline
%\end{tabular}
%\end{table}

\begin{acknowledgements}
This study was partly funded by NSERC Discovery and Accelerator grants (MAL, JRP).  MAL also gratefully acknowledges a Canada Research Chair and a Killam Research Fellowship.  We are grateful to Andrew Bateman and other members of the Lewis Lab for helpful discussions.
\end{acknowledgements}

\section*{Appendix A}
\label{appendix_a}

Let $u(x,t)$, $v(x,t)$, $p(x,t)$ and $q(x,t)$ be the density functions corresponding to $U(n,m)$, $V(n,m)$, $P(n,m)$ and $Q(n,m)$ respectively, where $x=an$ and $t=m\tau$.  First note the following limit as $a \rightarrow 0$, $k(a) \rightarrow \infty$, $ak(a)\rightarrow \delta$
\begin{align}
\frac{1}{2k(a)-1}\sum_{i=1-k(a)}^{k(a)-1}Q(n+i,m) = \frac{1}{2k(a)-1}\sum_{i=1-k(a)}^{k(a)-1}q(x+ai,t)a \rightarrow \frac{1}{2\delta}\int_{-\delta}^{\delta}q(x+z,t){\rm d}z.
\end{align}
Using the definition of $\bar{q}(x,t)$ given in equation (\ref{barq}), and writing equation (\ref{ume2}) down in terms of the density functions, we have
\begin{align}
&\frac{u(x,t+\tau)-u(x,t)}{\tau}=\frac{a^2}{2\tau}\biggl\{\frac{1}{a}\left[\frac{u(x+a,t)-u(x,t)}{a}-\frac{u(x,t)-u(x-a,t)}{a}\right]+\\ \nonumber
&\frac{1}{2a}\left[4u(x+a,t)\frac{\bar{q}(x+2a,t)-\bar{q}(x,t)}{2a}-4u(x-a,t)\frac{\bar{q}(x,t)-\bar{q}(x-2a,t)}{2a}\right] +\\ \nonumber
&\frac{1}{a}\biggl[\frac{u(x,t)\bar{q}(x+a,t)\bar{q}(x-a,t)-u(x-a,t)\bar{q}(x,t)\bar{q}(x-2a,t)}{a}-\\ \nonumber
&\frac{u(x+a,t)\bar{q}(x+2a,t)\bar{q}(x,t)-u(x,t)\bar{q}(x+a,t)\bar{q}(x-a,t)}{a}\biggr]\biggr\}.
\end{align}
We keep $x$ constant in the limit as $a\rightarrow 0, n\rightarrow\infty$.  Taylor expanding the right-hand side about $x$, assuming $a$ is arbitrarily small, gives the following expression
\begin{align}
\frac{u(x,t+\tau)-u(x,t)}{\tau}=\frac{a^2}{2\tau}\biggl\{&\frac{{\rm d}^2u}{{\rm d}x^2}-\left[4\frac{{\rm d}u}{{\rm d}x}\frac{{\rm d}\bar{q}}{{\rm d}x}+4u\frac{{\rm d}^2\bar{q}}{{\rm d}x^2}\right]-\nonumber \\
&\left[2u\bar{q}\frac{{\rm d}^2\bar{q}}{{\rm d}x^2}+2u\frac{{\rm d}\bar{q}}{{\rm d}x}\frac{{\rm d}\bar{q}}{{\rm d}x}+\frac{{\rm d}^2u}{{\rm d}x^2}\bar{q}^2+4\frac{{\rm d}u}{{\rm d}x}\frac{{\rm d}\bar{q}}{{\rm d}x}\bar{q}\right]+O(a)\biggr\}.
\label{uq_te}
\end{align}
In the limit as $a,\tau\rightarrow 0$ such that $a^2/(2\tau)\rightarrow D$, this simplifies to give equation (\ref{qu_time}).

\section*{Appendix B}
\label{sec:noave}

We look for solutions to equation (\ref{uv_ss_a}) in two cases: $\mbox{det}(A)\neq 0$ and $\mbox{det}(A)= 0$.  For $\mbox{det}(A)\neq 0$, one solution is to have $\dot{\bf u}={\bf 0}$, implying that $u(x)$ and $v(x)$ are constant functions so territorial patterns do not form.  

Otherwise, suppose that $\mbox{det}(A)\neq 0$, ${\rm d}u/{\rm d}x=0$ and ${\rm d}v/{\rm d}x\neq 0$.  Then the following equations hold
\begin{align}
2u(2m+v)=0, \label{uv_ss_a2} \\
(m+2u)(m+u)=0.
\label{uv_ss_a3}
\end{align}
Equation (\ref{uv_ss_a2}) implies $u=0$ or $v=-2m$.  However, if $u=0$ then equation (\ref{uv_ss_a3}) would imply $m=0$, which contradicts $\mbox{det}(A)\neq 0$.  Furthermore, if $v=-2m$ then $v<0$, which contradicts the fact that $v(x)$ is a probability density function.  In conclusion, if $\mbox{det}(A)\neq 0$, we cannot have ${\rm d}u/{\rm d}x=0$ and ${\rm d}v/{\rm d}x\neq 0$.  Similarly, if $\mbox{det}(A)\neq 0$, we cannot have ${\rm d}v/{\rm d}x=0$ and ${\rm d}u/{\rm d}x\neq 0$.  Therefore the only possible way for non-constant steady states to arise is if $\mbox{det}(A)=0$.

\begin{lemma}
\label{FiniteSolns}
If $\mbox{det}(A)=0$ then there are two possibilities.
\begin{enumerate}
\item {\bf No scent decay.} If $m=0$ then $p(x)=q(x)=1$ and $u(x)$, $v(x)$ can take any value.
\item {\bf Positive scent decay.} If $m>0$ then, for each $x\in [0,1]$, there are finitely many possible values for $u(x)$ and $v(x)$, one of which is $u(x)=v(x)=m$.  Furthermore, all solutions other than $u(x)=v(x)=m$ have $u(x),v(x)\neq m$.
\end{enumerate}
\end{lemma}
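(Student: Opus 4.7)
Proof plan.

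The proof splits cleanly into the trivial case $m=0$ and the substantive case $m>0$.

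When $m=0$, the matrix $A$ in equation (\ref{uv_ss_a}) has $m$ as an overall prefactor and so vanishes identically; then $A\dot{\bf u}={\bf 0}$ holds for every $\dot{\bf u}$ and places no constraint on $u(x)$ or $v(x)$. At any point where the densities are positive, equations (\ref{pu_ast_loc})--(\ref{qv_ast_loc}) collapse to $p(x)=u(x)/u(x)=1$ and $q(x)=v(x)/v(x)=1$.

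For $m>0$ I would start by observing that $A$ is never the zero matrix on the physically relevant quadrant $u,v\ge 0$, because $a:=(m+2v)(m+v)>0$ there. So the hypothesis $\det(A)=0$ forces $A$ to have rank exactly one, with one-dimensional kernel spanned by $\boldsymbol\xi=(-b,a)^T$, where $b=2u(2m+v)$. Writing $P(u,v):=\det(A)/m^2$, the locus $\{P=0\}$ is an algebraic curve. Any non-constant steady state must have $\dot{\bf u}$ a nonzero element of $\ker A$ on any interval where the solution varies, and since the earlier analysis in Appendix~B shows that $\det(A)\neq 0$ at a point forces $\dot{\bf u}={\bf 0}$ there, $(u(x),v(x))$ must remain on $\{P=0\}$ throughout that interval. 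Staying on the curve forces $\dot{\bf u}\cdot\nabla P=0$, and combining this with $\dot{\bf u}\parallel\boldsymbol\xi$ gives the additional polynomial identity
\begin{equation*}
Q(u,v):=a\,\partial_v P - b\,\partial_u P = 0.
\end{equation*}
So the admissible pairs $(u(x),v(x))$ are precisely the common zeros of $P$ and $Q$.

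Two short verifications finish the explicit claims of the lemma. At $(u,v)=(m,m)$, every entry of $A$ equals $6m^2$, so $\det(A)=0$; a direct calculation gives $\partial_u P(m,m)=\partial_v P(m,m)=-6m^3$, whence $Q(m,m)=6m^2(-6m^3)-6m^2(-6m^3)=0$, so $(m,m)$ is admissible. For the ``furthermore'' clause, substituting $u=m$ into $P(u,v)=0$ and simplifying reduces it to $6m^3(m-v)=0$, which forces $v=m$ for $m>0$; the $u\leftrightarrow v$ symmetry of $A$ gives the corresponding conclusion from $v=m$. Hence inside the admissible set, $(m,m)$ is the unique member with either coordinate equal to $m$.

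The principal obstacle is establishing the finiteness itself, namely that $\{P=0\}\cap\{Q=0\}$ has no one-dimensional component. By Bezout's theorem this is equivalent to $P$ and $Q$ having no nontrivial common factor in $\mathbb{R}[u,v]$, which in turn follows once we exhibit a point of $\{P=0\}$ on which $Q$ does not vanish. The local expansion $P(m+s,m+t)=-6m^3(s+t)-15m^2 st+O((|s|+|t|)^3)$ around $(m,m)$ makes this manageable: it shows the branch of $\{P=0\}$ through $(m,m)$ is smooth and tangent to $s+t=0$, and by expanding $\boldsymbol\xi$ and $\nabla P$ along this branch one can verify directly that $Q$ is not identically zero on $\{P=0\}$. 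With that established, Bezout gives the required finite bound and the remaining claims of the lemma follow.
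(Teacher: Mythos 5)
Your overall architecture coincides with the paper's: part 1 is handled identically, and for $m>0$ you derive the same two polynomial conditions the paper uses --- the determinant locus (\ref{det_a_poly2}) and a second equation obtained by combining the first row of $A\dot{\bf u}={\bf 0}$ with the derivative of $\det(A)=0$ along the solution (the paper's equations (\ref{dudv_poly}) and (\ref{other_poly}) are exactly your $\dot{\bf u}\parallel(-b,a)^T$ and $\dot{\bf u}\cdot\nabla P=0$) --- and then you invoke B\'ezout. Your explicit verifications that $(m,m)$ lies on both curves and that $u=m$ forces $v=m$ on $\{P=0\}$ also match the paper's computation $f(m,m,v)=6m^2(m-v)$.

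The genuine gap is in the finiteness step. You assert that ``$P$ and $Q$ having no nontrivial common factor\ldots follows once we exhibit a point of $\{P=0\}$ on which $Q$ does not vanish.'' That reduction fails if $P$ is reducible: if $P=P_1P_2$ and $Q=P_1R$, then $P$ and $Q$ share the factor $P_1$ and their common zero set contains the entire curve $\{P_1=0\}$, yet $Q$ is generically nonzero on $\{P_2=0\}\subset\{P=0\}$, so your witness point can exist anyway. A single witness only proves $P\nmid Q$, not $\gcd(P,Q)=1$; and since your proposed witness lies on the branch through $(m,m)$, even a component-wise version of the argument would dispose of only that one component. To close this you must either prove $P$ irreducible or check that $Q$ vanishes identically on no irreducible component of $\{P=0\}$. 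The paper takes the first route: viewing the cubic (\ref{det_a_poly2}) as a quadratic in $u$, it uses the explicit root formula (\ref{det_a_poly3}) to show the discriminant is not a square of a polynomial (ruling out a factorization into two factors linear in $u$), and checks that the coefficients of $u^2$, $u^1$, $u^0$ have no common non-constant divisor; irreducibility of $f$ then makes the ``one-point'' criterion legitimate and B\'ezout applies. Without an irreducibility argument of this kind, your proof does not establish part 2 of the lemma.
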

\begin{proof}
If $m=0$ then $\mbox{det}(A)=0$.  Furthermore, by equations (\ref{pu_ast}) and (\ref{qv_ast}), we have $p(x)=q(x)=1$.  Therefore ${\rm d}p/{\rm d}x={\rm d}q/{\rm d}x=0$ so that equations (\ref{qu_ss}) and (\ref{pv_ss}) hold regardless of the values of $u(x)$ and $v(x)$, proving part 1 of the lemma.  Indeed, in the time-dependent PDEs (\ref{pu_dimless},\ref{qv_dimless},\ref{qu_noave},\ref{pv_noave}), if an initial condition of $p(x,0)=q(x,0)=1$ is given and $m=0$ then ${\rm d}p/{\rm d}t={\rm d}q/{\rm d}t={\rm d}u/{\rm d}t={\rm d}v/{\rm d}t=0$ so $u(x,t)=u(x,0)$ and $v(x,t)=v(x,0)$ remain unchanged for all times $t$.

Now suppose $m\neq 0$.  For notational ease, we drop the explicit dependencies of $u$ and $v$ on $x$ for the rest of this proof, noting that they always refer to the steady states.  Then the equation $\mbox{det}(A)=0$ implies the following polynomial holds
\begin{align}
(m+2v)(m+2u)(m +u)(m+ v)=4uv(2m +u)(2m +v).
\label{det_a_poly}
\end{align}
Equation (\ref{det_a_poly}) can be rearranged to give
\begin{align}
u^2(2m-2v)+u(3m^2-7mv-2v^2)+(m^3+3m^2v+2mv^2)=0.
\label{det_a_poly2}
\end{align}
Clearly $m=u=v$ satisfies equation (\ref{det_a_poly2}). Furthermore, it follows from equation (\ref{det_a_poly2}) that $u=m$ if and only if $v=m$.  Hence any solution other than $u=v=m$ has both $u\neq m$ and $v\neq m$.

Differentiating equation (\ref{det_a_poly}) with respect to $x$, we find
\begin{align}
\frac{{\rm d}u}{{\rm d}x}=\frac{{\rm d}v}{{\rm d}x}\frac{7m u - 3m^2 - 4m v + 2u^2 + 4 uv}{3m^2 - 7m v + 4m u - 2v^2 - 4uv}.
\label{dudv_poly}
\end{align}
Using equation (\ref{dudv_poly}) and the top line of the vector equation (\ref{uv_ss_a}), together with our assumption that $du/dx,dv/dx\neq0$, we find that
\begin{align}
(7m u - 3m^2 - 4m v + 2 u^2 + 4uv)(m+2  v)(m+v)+&\nonumber \\
(4m u+2uv)(3m^2  - 7m v + 4m u - 2v^2 - 4uv)&=0.
\label{other_poly}
\end{align}

Proving part 2 of the lemma requires applying B\'ezout's Theorem \citep{fulton1969} to equations (\ref{det_a_poly2}) and (\ref{other_poly}).  B\'ezout's Theorem states that if two projective plane curves are zeros of polynomials with no non-constant greatest common divisor, then the curves intersect at finitely many points.  The polynomials on the left-hand sides of equations (\ref{det_a_poly2}) and (\ref{other_poly}) are homogeneous in three unknowns, therefore equations (\ref{det_a_poly2}) and (\ref{other_poly}) describe curves in the real projective plane.  Thus, to prove part 2 of Lemma \ref{FiniteSolns}, it suffices to show that these two polynomials have no non-constant common factor.

Let $f(m,u,v)$ be the polynomial on the left-hand side of equation (\ref{det_a_poly2}).  Since this is quadratic in $u$, it written as precisely one of the following two possible decompositions:
\begin{align}
f(m,u,v)&=[a_1(m,v)u+b_1(m,v)][a_2(m,v)u+b_2(m,v)]c(m,v), \label{det_a_ploy_fact1} 
\end{align}
or
\begin{align}
f(m,u,v)&=\alpha(m,u,v)\beta(m,v), \label{det_a_ploy_fact2}
\end{align}
where $a_1(m,v)$, $a_2(m,v)$, $b_1(m,v)$, $b_2(m,v)$, and $\beta(m,v)$ are polynomials, and $\alpha(m,u,v)$ is an irreducible polynomial.  By solving equation (\ref{det_a_poly2}) in terms of $u$, we find that
\begin{align}
u=\frac{7mv+2v^2-3m^2 \pm \sqrt{4(v-\gamma_+m)(v-\gamma_-m)(v-\vartheta_+m)(v-\vartheta_-m)}}{4(m-v)},
\label{det_a_poly3}
\end{align}
where 
\begin{align}
\gamma_{\pm}&=\frac{1}{4}\left[-11-4\sqrt{3}\pm 2\sqrt{\frac{225}{4}+30\sqrt{3}}\right], \nonumber \\
\vartheta_{\pm}&=-\frac{11}{4}+\sqrt{3}\pm \frac{1}{4}\sqrt{15(15-8\sqrt{3})}.
\label{gamma_theta}
\end{align}
Therefore the numerator of equation (\ref{det_a_poly3}) is not a polynomial, so the decomposition given in equation (\ref{det_a_ploy_fact1}) cannot hold.  

It follows that $f(m,u,v)=\alpha(m,u,v)\beta(m,v)$ where $\alpha(m,u,v)$ is irreducible and $\beta(m,v)$ is the greatest common divisor of the  coefficients of $u^n$ in $f(m,u,v)$ for $n=0,1,2$.  These coefficients are $2m-2v$, $3m^2-7mv-2v^2$ and $m^3+3m^2v+2mv^2$ (equation \ref{det_a_poly2}).  Since $m-v$ does not divide $3m^2-7mv-2v^2$ or $m^3+3m^2v+2mv^2$, it follows that $\beta(m,v)$ is a constant.  Hence $f(m,u,v)$ is irreducible.

Since $f(m,u,v)$ is both irreducible and not a constant multiple of the polynomial in equation (\ref{other_poly}), there is no non-constant greatest common divisor of the polynomials in equations (\ref{det_a_poly2}) and (\ref{other_poly}).  The proof of part 2 then follows from B\'ezout's Theorem. \qed
\end{proof}

\begin{figure*}
% Use the relevant command to insert your figure file.
% For example, with the graphicx package use
\includegraphics[width=\textwidth]{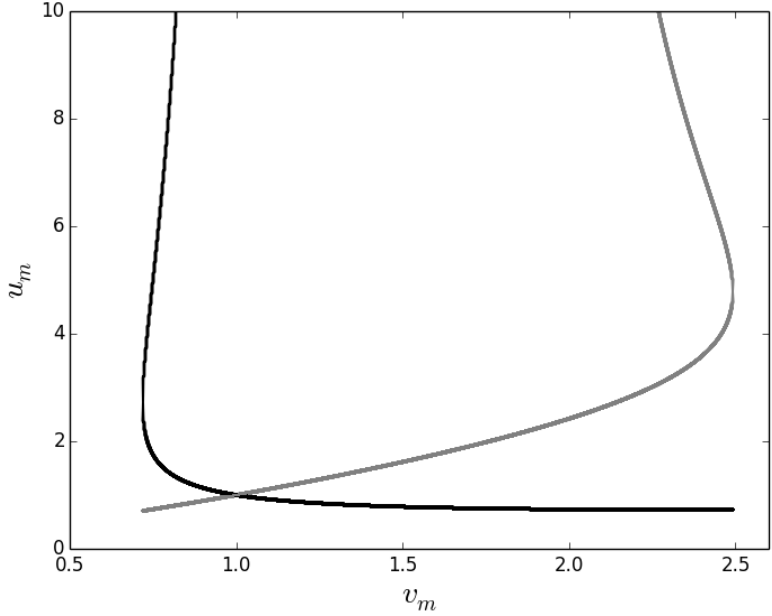}
\caption{{\bf Possible values of ${u}_m=u(x)/m$ and ${v}_m=v(x)/m$.} The black curve denotes solutions to equation (\ref{det_a_poly4}), whereas the grey curve shows solutions to equation (\ref{other_poly2}).  There appears to be only one crossing-point for positive real values of both ${u}_m$ and ${v}_m$, which is where ${u}_m={v}_m=1$, so that $u=v=m$.
}
\label{weak_soln_fig}       % Give a unique label
\end{figure*}
Lemma \ref{FiniteSolns} enables us to prove Theorem \ref{NoTerrsLocInt} from Section \ref{sec:tp}, as follows.

\vspace{3mm}\noindent{\it Proof of Theorem \ref{NoTerrsLocInt}.} Part 1 of Theorem \ref{NoTerrsLocInt} is identical to part 1 of Lemma \ref{FiniteSolns}.  To show part 2, note that classical solutions must be continuous.  Lemma \ref{FiniteSolns} states that there are only finitely many possible values of $u$ and $v$.  Therefore any classical solution must be constant. \qed

\begin{note}
Numerical analysis suggests that $u=v=m$ is the only positive real solution (Fig. \ref{weak_soln_fig}).  Since we are interested in the case $m\neq 0$, we set ${v}_m=v/m$, ${u}_m=u/m$ and assume $v \neq m$.  Then equation (\ref{det_a_poly2}) implies
\begin{align}
{u}_m=\frac{7{v}_m+2{v}_m^2-3 \pm \sqrt{(3-7{v}_m-2{v}_m^2)^2-8(1-{v}_m)(1+3{v}_m+2{v}_m^2)}}{4(1-{v}_m)}.
\label{det_a_poly4}
\end{align}
Furthermore, equation (\ref{other_poly}) rearranges to give ${u}_m$ as another two-valued function of ${v}_m$
\begin{align}
{u}_m=&\frac{4{v}_m^3+4{v}_m^2+3{v}_m+19}{8{v}_m^2+4{v}_m-36}\pm\nonumber \\
&\frac{\sqrt{(4{v}_m^3+4{v}_m^2+3{v}_m+19)^2-4(4{v}_m^2+2{v}_m-18)(8{v}_m^3+18{v}_m^2+13{v}_m+3)}}{8{v}_m^2+4{v}_m-36}.
\label{other_poly2}
\end{align}
The black curve in Fig. \ref{weak_soln_fig} has an asymptote at ${v}_m=1$, where the denominator of the right-hand side of equation (\ref{det_a_poly4}) tends to $0$.  The grey curve has an asymptote at ${v}_m=(\sqrt{73}-1)/4$, where the denominator of the right-hand side of equation (\ref{other_poly2}) tends to $0$, so the two curves do not cross at values of ${u}_m$ higher than those shown in Fig. \ref{weak_soln_fig}.
\end{note}

\end{document}